%
\documentclass{llncs}



\usepackage{graphicx}
\usepackage{color}
\usepackage{algorithm}
\usepackage[noend]{algorithmic}


\usepackage{amsmath}
\usepackage{amssymb}
\usepackage{latexsym}
\usepackage{array}
\usepackage{stmaryrd}

\usepackage{paralist}
\usepackage{graphicx} 
\usepackage{xspace}
\usepackage{url}
\usepackage{tabulary,booktabs,multirow}
\usepackage{algorithm, algorithmic}
\usepackage{tikz}
\usepackage{footnote}
\usepackage{subfig}
\usepackage[shortlabels]{enumitem}
\setlist{nolistsep}

\usepackage{tikz}
\usetikzlibrary{positioning,arrows,automata,shapes,decorations.pathreplacing,patterns}

\usepackage{ulem}
\normalem

\newcommand{\res}{\mathsf{res}}

\newcommand{\ur}{\mathit{UR}}
\newcommand{\etal}{\emph{et al.}\xspace}
\newcommand{\ie}{\emph{i.e.}\xspace}

\newcommand{\eg}{\emph{e.g.}\xspace}

\newcommand{\SAT}{\textsc{SAT}\xspace}
\newcommand{\TDM}{\textsc{$3$-Dimensional Matching}\xspace}

\newcommand{\RCPlong}{\textsc{Resiliency Checking Problem}\xspace}
\newcommand{\RCPshort}{\textsc{RCP}\xspace}
\newcommand{\RCP}[1]{\ensuremath{\RCPshort \langle #1 \rangle}}

\newcommand{\ILPF}{\textsc{ILPF}\xspace}

\newcommand{\Res}{R}
\newcommand{\p}{p}
\renewcommand{\P}{P}
\newcommand{\Users}{U}

\newcommand{\U}{\mathcal{U}}

\renewcommand{\S}{\mathcal{S}}
\newcommand{\C}{\mathcal{C}}

\usepackage{makeidx}  
\begin{document}
\frontmatter          
%
%
%
\mainmatter              
\title{A Multivariate Approach for Checking Resiliency in Access Control\protect\footnote{This research was partially supported by EPSRC grant EP/K005162/1. Gutin's research was also supported by Royal Society Wolfson Research Merit Award.}}
%
%
\author{Jason Crampton \and Gregory Gutin \and R\'emi Watrigant}
%
%
%
\institute{Royal Holloway University of London}


\date{}
\maketitle%
\begin{abstract}
In recent years, several combinatorial problems were introduced in the area of access control. 
Typically, such problems deal with an authorization policy, seen as a relation $\ur \subseteq \Users \times \Res$, where $(u, r) \in \ur$ means that user $u$ is authorized to access resource $r$.
Li, Tripunitara and Wang (2009) introduced the \RCPlong (\RCPshort), in which we are given an authorization policy, a subset of resources $P \subseteq \Res$, as well as integers $s \ge 0$, $d \ge 1$ and $t \geq 1$. 
It asks whether upon removal of any set of at most $s$ users, there still exist $d$ pairwise disjoint sets of at most $t$ users such that each set has collectively access to all resources in $P$. 
This problem possesses several parameters which appear to take small values in practice. We thus analyze the parameterized complexity of \RCPshort with respect to these parameters, by considering all possible combinations of $|P|, s, d, t$. In all but one case, we are able to settle whether the problem is in FPT, XP, W[2]-hard, para-NP-hard or para-coNP-hard. We also consider the restricted case where $s=0$ for which we determine the complexity for all possible combinations of the parameters.
\end{abstract}

\section{Introduction}\label{sec:intro}
\subsection{Context and definition of the problem}

Access control is a fundamental aspect of the security of any multi-user computing system. Typically, it is based on the idea of specifying and enforcing an authorization policy, identifying which interactions between a set of users $\Users$ and a set of resources $\Res$ are to be allowed by the system~\cite{SaCoFeYo96}. More formally, an authorization policy is defined as a relation $\ur \subseteq \Users \times \Res$, where $(u, r) \in \ur$ means that user $u$ is authorized to access resource $r$.  
Quite recently, we have seen the introduction of resiliency policies, whose satisfaction indicates that a system will continue to function as intended in the absence of some number of authorized users \cite{LiWaTr09,WaLi10}. 
Li, Tripunitara and Wang's seminal work \cite{LiWaTr09} introduces a number of problems associated with the satisfaction of a resiliency policy.  One of their motivating examples concerns the emergency response to a natural disaster, where teams of users must perform the same critical operation(s) at multiple (distinct) geographical locations.  Thus the members of each team must be authorized collectively to perform the operation(s).  In addition, we may wish to impose an upper bound on the size of the teams because, for example, of constraints on transportation.

For a user $u \in \Users$ and a set of users $V \subseteq \Users$, we define $N_{\ur}(u) = \{r \in \Res : (u, r) \in \ur\}$ the \emph{neighborhood} of $u$ and, by extension, $N_{\ur}(V) = \bigcup_{u \in V} N_{\ur}(u)$ the \emph{neighborhood} of $V$, omitting the subscript $\ur$ if the authorization policy is clear from the context.
Given an authorization policy $\ur \subseteq \Users \times \Res$, an instance of the \RCPlong (\RCPshort) is defined by a resiliency policy $\res(P, s, d, t)$, where $P \subseteq \Res$, $s \ge 0$, $d \ge 1$ and $t \ge 1$. We say that $\ur$ \emph{satisfies} $\res(P, s, d, t)$ if and only if for every subset $S \subseteq \Users$ of at most $s$ users, there exist $d$ pairwise disjoint subsets of users $V_1, \dots, V_d$ such that for all $i \in \{1, \dots, d\}$:
%
\begin{align}
&V_i \cap S = \emptyset, \label{def:cond:intersect} \\
&|V_i| \le t, \label{def:cond:size} \\
&N(V_i) \supseteq P. \label{def:cond:neighb}
\end{align}

We are now ready to define the main problem we study in this paper:\\

\fbox{
\begin{minipage}[c]{11cm}
\textbf{\RCPlong (\RCPshort)}\\
\underline{Input:} $\ur \subseteq \Users \times \Res$, $P \subseteq \Res$, $s \ge 0$, $d \ge 1$, $t \ge 1$.\\
\underline{Question:} Does $\ur$ satisfy $\res(P, s, d, t)$ ?
\end{minipage}}~\\ 

Furthermore, we will adopt the bracket notation \RCP{} used by Li \etal~\cite{LiWaTr09} to denote some restrictions of the problem, in which one or more parameters (among $s$, $d$ and $t$) are fixed. In particular, we will consider the cases where $s$ and $d$ are respectively set to $0$ and/or $1$ (or other fixed positive values), while $t$ might be set to $\infty$, meaning that there is no constraint on the size of the sets (which is actually equivalent to $t=|P|$, implying that we may assume in the remainder that $t \le |P|$). For instance, \RCP{s=0} denotes the variant in which $s$ is fixed to $0$, \ie we ask for the satisfaction of $\res(P, 0, d, t)$. In the remainder of the paper, we set $p=|P|$.

Given an instance of \RCP{}, we say that a set of $d$ pairwise disjoint subsets of users $V = \{V_1, \dots, V_d\}$ satisfying conditions~\eqref{def:cond:size} and~\eqref{def:cond:neighb} is a \emph{set of teams}. For such a set of teams, we define $\U(V) = \bigcup_{i=1}^d V_i$.
Given $\Users' \subseteq \Users$, the \emph{restriction} of $\ur$ to $\Users'$ is defined by $\ur|_{\Users'} = \ur \cap (\Users' \times \Res)$. 
Finally, a set of users $S \subseteq \Users$ is called a \emph{blocker set} if for every set of teams $V=\{V_1, \dots, V_d\}$, we have $\U(V) \cap S \neq \emptyset$. 
 Equivalently, observe that $S$ is a blocker set if and only if $\ur|_{\Users \setminus S}$ does not satisfy $\res(P, 0, d, t)$.
 Throughout the paper, we write $[d]$ to denote $\{1, \dots, d\}$ for any integer $d \ge 1$, and we will often make use of the $O^*(.)$ notation, which omits polynomial factors and terms.
 
\subsection{Parameters}

An instance of \RCP{} contains several parameters (namely $s$, $d$ and $t$) which may be used for the complexity analysis of the problem. 
An interesting point of the work of Li \etal~\cite{LiWaTr09} is that the number of users in an organization will typically be large in comparison to the other parameters ($s$, $d$, $t$, and even $p$) in practice. 
In their experiments, the maximum values used are $n=100$, $p=10$ and $d=7$ (they only run experiments on the variant where $t=\infty$, but, as we observed previously, we may set $t=p$).
 With this in mind, we exploit the theory of fixed-parameter tractability in order to settle the parameterized complexity of the problem.

Given an instance $x$ (of size $|x|$) of a decision problem, with some parameter\footnote{Note that one can aggregate several parameters $p_1, \dots, p_m$ by defining $k=p_1+\dots+p_m$, in which case we will say the parameter is $(p_1, \dots, p_m)$.} $k$, we are interested in algorithms deciding whether $x$ is positive or negative in polynomial time when $k$ is bounded above by a constant. 
More precisely, if such an algorithm has running time $O(f(k)|x|^{O(1)})$ for some computable function $f$, then we will say that this algorithm is fixed-parameter tractable (FPT), while if its running time is $O(|x|^{f(k)})$ for some computable function $f$, we will say that this algorithm is XP (an FPT algorithm is thus an XP algorithm). 
By extension, FPT (resp. XP) gathers all problems for which an FPT (resp. XP) algorithm exists.
Proving the NP-hardness of a problem in the case where a parameter $k$ is bounded above by a constant immediately forbids the existence of any XP (and thus FPT) algorithm unless P $=$ NP. 
In this case, we will say that this parameterized problem is \mbox{para-NP-hard}. 
A similar definition can be given using coNP-hard and coNP instead of NP-hard and NP, respectively, leading to the para-coNP-hard complexity class (and thus, if a problem is shown to be para-coNP-hard, then it does not belong to XP unless P $=$ coNP). In the following, para-(co)NP-hard denotes the union of para-NP-hard and para-coNP-hard.
Finally, the W[$i$]-hierarchy of parameterized problems is typically used to rule out the existence of an FPT algorithm, under the widely believed conjecture that FPT $\neq$ W[$1$].
For more details about fixed-parameter tractability, we refer the reader to the recent monographs \cite{CyFoKoLoMaPiPiSa15,DoFe13}.

\subsection{Related work}

As one might expect, the \RCP{} problem is strongly related to some known combinatorial problems. 
Indeed, one can observe that \RCP{s=0, d=1} is equivalent to the \textsc{Set Cover} problem, while \RCP{s=0, t=\infty} can be reduced in a straightforward way from the \textsc{Domatic Partition} problem (in the \textsc{Domatic Partition} problem, one asks whether a given graph admits $k$ pairwise disjoint dominating sets). 
Li \etal~\cite{LiWaTr09} obtained several (mainly negative) results for \RCP{} in some restricted cases which can be summarized by the following theorem.

\begin{theorem}[\cite{LiWaTr09}] \label{thm:lietal} We have the following:
  \begin{itemize}
  	\item \RCP{}, \RCP{d=1} and \RCP{t=\infty} are NP-hard and are in\footnote{coNP$^{\text{NP}}$ is the set of problems whose complement can be solved by a non-deterministic Turing machine having access to an oracle to a problem in NP.} coNP$^{\text{NP}}$;
  	\item \RCP{s=0, d=1}, \RCP{s=0, t=\infty} are NP-hard;
  	\item \RCP{d=1, t=\infty} can be solved in linear time.
  \end{itemize}
\end{theorem}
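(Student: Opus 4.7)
The plan is to dispose of the three bullets in order, leveraging the reductions hinted at earlier together with a quantifier-alternation argument. For the second bullet, \RCP{s=0, d=1} asks for at most $t$ users whose union of authorised resources covers $P$, which is literally \textsc{Set Cover} and therefore NP-hard. For \RCP{s=0, t=\infty} I would reduce from \textsc{Domatic Partition}: given $G=(V,E)$ and $d$, set $\Users=\Res=V$, $P=\Res$, and $(u,r)\in\ur$ iff $r\in N_G[u]$; then $N(V_i)\supseteq P$ iff $V_i$ is a dominating set of $G$, and since any $d$ pairwise disjoint dominating sets can be completed into a partition of $V$ by distributing leftover vertices (a superset of a dominating set is still dominating), the instance is positive iff the domatic number of $G$ is at least $d$, which is well known to be NP-hard. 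The NP-hardness of the three variants in the first bullet then follows by specialising to $s=0$.

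To show containment in coNP$^{\text{NP}}$, I would read off the quantifier structure of the definition: a no-instance is witnessed by some $S\subseteq\Users$ with $|S|\le s$ such that $\ur|_{\Users\setminus S}$ fails to satisfy $\res(P,0,d,t)$. A non-deterministic machine can guess such an $S$ in polynomial time and then consult an NP oracle to check whether \emph{no} set of $d$ teams exists in the restricted policy (the positive existence question is in NP, since the teams themselves form a short, polynomially verifiable certificate). This places the complement in NP$^{\text{NP}}$ and hence \RCP{} in coNP$^{\text{NP}}$; the same argument applies verbatim to \RCP{d=1} and \RCP{t=\infty}.

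For the third bullet, when $d=1$ and $t=\infty$ the team $V_1$ may be taken as large as $\Users\setminus S$ at no cost, so $\res(P,s,1,\infty)$ is satisfied iff $N(\Users\setminus S)\supseteq P$ for every $S\subseteq\Users$ with $|S|\le s$. An easy case split shows this reduces to requiring every $r\in P$ to have strictly more than $s$ authorised users (if some $r$ has at most $s$, put all of them into $S$ to violate the condition; conversely, any $S$ with $|S|\le s$ leaves at least one authorised user for every $r\in P$ when the counts are large enough), which is decidable in a single linear pass through $\ur$. The main conceptual point, and the only place where a little care is needed, is the coNP$^{\text{NP}}$ containment: one must notice that although the sub-instance of \RCP{s=0} defined by a guessed $S$ is itself NP-hard, deciding it is nevertheless in NP, so a single oracle call after the guess suffices; everything else is routine translation of known hardness results or a straightforward collapse argument.
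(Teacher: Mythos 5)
Your proposal is correct, and it reconstructs the intended arguments: the paper itself states this theorem only as a citation to Li \emph{et al.}, but the reductions you give (\textsc{Set Cover} for \RCP{s=0,d=1}, \textsc{Domatic Partition} for \RCP{s=0,t=\infty}) are exactly the ones the paper alludes to in its related-work discussion, the ``guess $S$, then one NP-oracle call'' argument is the standard way to place the complement in NP$^{\text{NP}}$, and the counting criterion (every $r\in P$ must have more than $s$ authorized users) is the right linear-time characterisation for $d=1$, $t=\infty$. No gaps.
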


In addition, they developed and implemented an algorithm for \RCP{} which consists of%
\begin{inparaenum}[(i)] 
	\item enumerating all subsets of at most $s$ users, and \label{firststep}
	\item for each such subset $S$, determining the satisfaction of $\res(P, 0, d, t)$ for $\ur|_{\Users \setminus S}$.\label{secondstep}
\end{inparaenum}
Step (\ref{secondstep}) is achieved by a \SAT formulation of the problem and the use of an off-the-shelf \SAT solver, while they develop a pruning strategy in order to avoid the entire enumeration of all subsets of users of size at most $s$, resulting in an efficient speed-up of \mbox{step (\ref{firststep})}. Quite surprisingly, they observe that the bottleneck of their algorithm lies in the second step, where an instance of \RCP{s=0} has to be solved. This motivated us to focus on the parameterized complexity of \RCP{s=0} separately.

\subsection{Contribution and organization of the paper}\label{sec:contribution}

Our goal in this paper is thus to determine the parameterized complexity of \RCP{} and \RCP{s=0} with respect to parameters $p, s, d, t$, by considering every possible combination of them. 
In each case, we aim at determining whether the problem is%
\begin{inparaenum}[(i)]
	\item in FPT, \label{case:fpt}
	\item in XP but W[$i$]-hard for some $i \ge 1$, or \label{case:xp}
	\item para-(co)NP-hard. \label{case:parahard}
\end{inparaenum}


\begin{figure}[!b]
  \begin{center}
    
\begin{tikzpicture}%
  [fpt/.style={rectangle,draw,minimum width=1cm},%
   xp/.style={rectangle,draw,pattern=north west lines,pattern color=black!60,minimum width=1cm},%
   paranphard/.style={rectangle,draw,fill=black!30,minimum width=1cm},%
   unknown/.style={circle,draw}]
  \node[fpt] (pdt) at (0,0) {$p,d,t$};
  \node[fpt] at (-1.5,-2) (pd) {$p,d$};
  \node[fpt] at (-1.5,-4) (p) {$p$};
  \node[fpt] (pt) at (0,-2) {$p,t$};
  \node[xp] (dt) at (1.5,-2) {$d,t$};
  \node[paranphard] (t) at (1.5,-4) {$t$};
  \node[paranphard] (d) at (0,-4) {$d$};
  \draw[->] (pdt) -- (pd);
  \draw[->] (pdt) -- (pt);
  \draw[->] (pdt) -- (dt);
  \draw[->] (pd) -- (p);
  \draw[->] (pd) -- (d);
  \draw[->] (dt) -- (d);
  \draw[->] (dt) -- (t);
  \draw[->] (pt) -- (p);
  \draw[->] (pt) -- (t);
  \node[fpt, minimum width=12pt,label=right:FPT] (legend-fpt) at (-2,-6) {};
  \node[xp, minimum width=12pt,label=right:{W[2]-hard but XP}] (legend-xp) at (0,-6) {};
  \node[paranphard, minimum width=12pt,label=right:para-(co)NP-hard] (legend-para) at (4,-6) {};
  \node[unknown, minimum width=12pt,label=right:open] (legend-unknown) at (8,-6) {};

  \node[fpt] (psdt) at (5.75,1) {$p,s,d,t$};

  \node[fpt] at (3,-1) (pst) {$p,s,t$};
  \node[fpt] at (4.75,-1) (psd) {$p,s,d$};
  \node[fpt] at (6.5,-1) (pdt) {$p,d,t$};
  \node[xp] at (8.25,-1) (sdt) {$s,d,t$};

  \node[unknown] (pt) at (2.5,-3) (pt) {$p,t$};
  \node[fpt] at (3.8,-3) (ps) {$p,s$};
  \node[fpt] at (5.1,-3) (pd) {$p,d$};
  \node[paranphard] (sd) at (6.4,-3) {$s,d$};
  \node[paranphard] (st) at (7.7,-3) {$s,t$};
  \node[paranphard] (dt) at (9,-3) {$d,t$};

  \node[unknown] (p) at (3,-5) (p) {$p$};
  \node[paranphard] (s) at (4.75,-5) (s) {$s$};
  \node[paranphard] (t) at (6.5,-5) (t) {$t$};
  \node[paranphard] (d) at (8.25,-5) (d) {$d$};
  \draw[->] (psdt) -- (psd);
  \draw[->] (psdt) -- (pst);
  \draw[->] (psdt) -- (pdt);
  \draw[->] (psdt) -- (sdt);
  \draw[->] (pdt) -- (pd);
  \draw[->] (pdt) -- (pt);
  \draw[->] (pdt) -- (dt);
  \draw[->] (pst) -- (ps);
  \draw[->] (pst) -- (pt);
  \draw[->] (pst) -- (st);
  \draw[->] (psd) -- (ps);
  \draw[->] (psd) -- (pd);
  \draw[->] (psd) -- (sd);
  \draw[->] (sdt) -- (sd);
  \draw[->] (sdt) -- (st);
  \draw[->] (sdt) -- (dt);
  \draw[->] (pd) -- (p);
  \draw[->] (pd) -- (d);
  \draw[->] (dt) -- (d);
  \draw[->] (dt) -- (t);
  \draw[->] (pt) -- (p);
  \draw[->] (pt) -- (t);
  \draw[->] (ps) -- (p);
  \draw[->] (ps) -- (s);
  \draw[->] (pd) -- (p);
  \draw[->] (pd) -- (d);
  \draw[->] (sd) -- (s);
  \draw[->] (sd) -- (d);
  \draw[->] (st) -- (s);
  \draw[->] (st) -- (t);
  \draw[->] (dt) -- (d);
  \draw[->] (dt) -- (t);
  
 \end{tikzpicture}
 
    \caption{Schemas of the complexity of \RCP{s=0} (left) and \RCP{} (right) after the results obtained in this paper (see the end of this section for the difference between old and new results).}
    \label{fig:lattices}
  \end{center}
\end{figure}
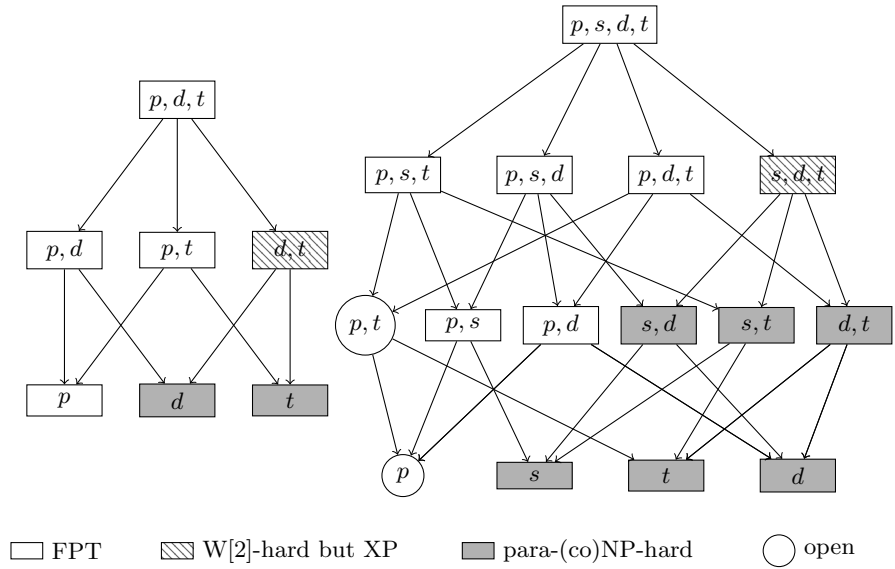

Figure~\ref{fig:lattices} summarizes the (already known and) obtained results for \RCP{} and \RCP{s=0} with respect to all possible combinations of the parameters specified previously.
An arrow $A \longrightarrow B$ means that $A$ is a larger parameter than $B$, in the sense that an FPT algorithm parameterized by $B$ implies an FPT algorithm parameterized by $A$, and, conversely, any negative result parameterized by $A$ implies the same negative result parameterized by $B$.
Since (under classical complexity assumptions) a decision problem is either in one of the previous cases (\ref{case:fpt}), (\ref{case:xp}) or (\ref{case:parahard}), one can observe that the parameterized complexity of \RCP{s=0} is now completely determined with respect to all possible combinations of parameters $p$, $d$ and $t$. 
Concerning the more general case \RCP{}, only the parameterization by $p$ only remains unknown (recall that as we mentioned earlier, we may assume in any instance that $t \le p$, implying that adding $t$ in the parameter list is of no importance concerning the membership in these complexity classes, both for positive or negative results).

The next section gathers all our results for the general case \RCP{}, namely:
\begin{itemize}
	\item membership in XP parameterized by $(s, d, t)$ (Theorem~\ref{thm:xp-ps-sdt}),
	\item membership in FPT parameterized by $(p, d)$ or $(p, s)$ (Theorem~\ref{thm:rcpdp}),
	\item para-coNP-hardness parameterized by $(d, t)$ (Theorem~\ref{thm:paranphard-hittingset}),
	\item para-NP-hardness parameterized by $(s, t)$ (Theorem~\ref{thm:paranphard-matching}).
\end{itemize} 
Note that the para-NP-hardness for $(s, d)$ was already known (Theorem~\ref{thm:lietal}), as well as the W[$2$]-hardness for $(s, d, t)$ (see explanation in Section~\ref{sec:negres}).

Section~\ref{sec:sZero} gathers all our results for the restricted case \RCP{s=0}, namely:
\begin{itemize}
	\item an FPT algorithm parameterized by $(d, p)$ with an optimal running time (under ETH) when $d$ is fixed (Theorems~\ref{thm:DP} and \ref{thm:lowerbounds}),
	\item membership in FPT parameterized by $p$ only (Theorem~\ref{thm:rcp-parameterized-by-p-only}).
\end{itemize}
Note that the W[$2$]-hardness for $(d, t)$ is inherited from \RCP{}, while the XP membership results from a brute-force enumeration of all subsets of users of size $dt$.
We also investigate in this section the question of data (user) reductions and present positive and negative kernelization results depending on the considered variant: \RCP{s=0} or \RCP{s=0, t=\infty} (Theorem~\ref{thm:kernel}).
We finally conclude the paper in Section~\ref{sec:conclusion}.


\section{The general case}\label{sec:generalcase}

\subsection{Positive results}

First, observe that there exists a simple XP algorithm for \RCP{} parameterized by $(s, d, t)$. Indeed, recall that the problem actually aims to check whether there is a set $S \subseteq \Users$ of size at most $s$ such that for any set of teams $V=\{V_1, \dots, V_d\}$ we have $S \cap \U(V) \neq \emptyset$, and note that finding a set of teams is exactly the \RCP{s=0} problem, which is in XP parameterized by $(d, t)$, as said in Section~\ref{sec:contribution}. Hence, since $|\U(V)| \le dt$, by finding iteratively a set of teams and branching on each element to be removed from it (and included in the future blocker set), one can determine whether there exists a blocker set of size at most $s$ in XP time parameterized by $(s, d, t)$:

\begin{theorem} \label{thm:xp-ps-sdt}
\RCP{} is in XP when parameterized by $(s, d, t)$
\end{theorem}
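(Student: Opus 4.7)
The plan is to follow the search-tree strategy sketched just before the theorem statement and turn it into a bounded-depth branching algorithm. Maintain a candidate blocker set $S \subseteq \Users$, initially empty. At each recursive call, test whether $\ur|_{\Users \setminus S}$ satisfies $\res(P,0,d,t)$; if not, $S$ is a valid blocker set and we return YES. Otherwise we find an explicit set of teams $V = \{V_1,\dots,V_d\}$ in $\ur|_{\Users \setminus S}$. Since any blocker set extending $S$ must intersect $\U(V)$, and $|\U(V)| \le dt$, we branch on the (at most) $dt$ users in $\U(V)$, recursing on $S \cup \{u\}$ for each $u \in \U(V)$. A branch is abandoned as soon as $|S| > s$.

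For the subroutine used at each node, namely solving one instance of \RCP{s=0} and exhibiting a set of teams when one exists, we invoke the XP algorithm parameterized by $(d,t)$ recalled in Section~\ref{sec:contribution}: brute-force enumeration of all subsets of $\Users \setminus S$ of size at most $dt$, each checked in polynomial time for conditions~\eqref{def:cond:size} and~\eqref{def:cond:neighb}, running in time $n^{O(dt)}$.

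Correctness is proved by a standard invariant. Completeness (if the algorithm returns YES, then a blocker set of size at most $s$ exists) follows immediately from the halting condition. For soundness, suppose a blocker set $S^{\ast}$ with $|S^{\ast}| \le s$ exists; we argue that the algorithm finds one along the path obeying $S \subseteq S^{\ast}$. Indeed, so long as $S \subsetneq S^{\ast}$ is not itself a blocker set, any set of teams $V$ produced at the current node satisfies $\U(V) \cap S^{\ast} \neq \emptyset$ (because $S^{\ast}$ is a blocker set and $S \subseteq S^{\ast}$), so at least one branch extends $S$ by a user of $S^{\ast}$. After at most $|S^{\ast}| \le s$ such extensions the accumulated $S$ is a blocker set (as $S^{\ast}$ itself is reached in the worst case), and the algorithm returns YES on that branch.

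The running time is bounded by the product of the number of leaves and the cost at each node: the search tree has depth at most $s$ and branching factor at most $dt$, giving at most $(dt)^{s}$ leaves, and each node costs $n^{O(dt)}$. The overall running time is $(dt)^{s} \cdot n^{O(dt)}$, which is XP in $(s,d,t)$. The only real subtlety is the soundness argument above, ensuring that branching on $\U(V)$ cannot miss an optimal blocker set; the remainder is routine.
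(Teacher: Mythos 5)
Your proposal is correct and follows essentially the same branching strategy the paper sketches in the paragraph preceding the theorem (iteratively find a set of teams via the XP algorithm for \RCP{s=0} parameterized by $(d,t)$, then branch on the at most $dt$ users of $\U(V)$ to build a blocker set of size at most $s$), with the correctness invariant and the $(dt)^{s}\cdot n^{O(dt)}$ bound spelled out explicitly. The only cosmetic point is that your YES/NO convention answers the complementary question (existence of a blocker set) rather than satisfaction of $\res(P,s,d,t)$; flipping the output is immaterial for XP membership.
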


Despite its simplicity, this result is actually somehow tight. First, as we will see later (Section~\ref{sec:negres}), \RCP{} is W[$2$]-hard with this parameterization. In addition, considering a strict subset of $\{s, d, t\}$ as a parameter makes the problem para-(co)NP-hard (Theorems~\ref{thm:lietal}, \ref{thm:paranphard-hittingset} and \ref{thm:paranphard-matching}).
A way of going further is to ``replace'' $t$ by $p$ (since we may assume $t \le p$). With this modification, we show in the next result how to get rid of the parameter $s$ or $d$ by designing an FPT algorithm parameterized by $(p, d)$ or $(p, s)$.

\begin{theorem}\label{thm:rcpdp}
\RCP{} is FPT when parameterized by $(p, \min\{s, d\})$.
\end{theorem}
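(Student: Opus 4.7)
The plan exploits a pattern-based symmetry combined with an enumeration whose scope depends on whether $\min\{s,d\}$ equals $d$ or $s$. For each user $u$ define its \emph{pattern} $\pi(u) = N(u) \cap P \subseteq P$; users with the same pattern are interchangeable both for blocking and for team formation, so the instance reduces to multiplicities $(n_\pi)_{\pi \subseteq P}$ over at most $2^p$ patterns. A \emph{team type} is a multiset of patterns of total size at most $t \le p$ whose union covers $P$, so the number of team types $|\mathcal{T}|$ is bounded by a function of $p$ alone; consequently the number $|Y|$ of pattern-symmetric team assignments $y \in \mathbb{Z}_{\ge 0}^{\mathcal{T}}$ with $\sum_\tau y_\tau = d$ and $\sum_\tau y_\tau\, \tau(\pi) \le n_\pi$ is bounded by a function of $(p,d)$.

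When $\min\{s,d\} = d$, I parameterize by $(p,d)$ and enumerate over $Y$ directly. The structural observation is that pattern-symmetric blockers $(r_\pi)$ are in correspondence with functions $\phi : Y \to 2^P$ satisfying $y(\phi(y)) \ge 1$ for every $y \in Y$: the minimum blocker consistent with $\phi$ has coordinates
$$r^\phi_\pi = \max\{\,n_\pi - y(\pi) + 1 : y \in Y,\ \phi(y) = \pi\,\}$$
(taken to be $0$ when the preimage is empty). There are at most $(2^p)^{|Y|}$ such $\phi$, a function of $(p,d)$; computing $\sum_\pi r^\phi_\pi$ for each and checking whether some value is at most $s$ settles the problem in FPT time.

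When $\min\{s,d\} = s$, I parameterize by $(p,s)$ and instead enumerate pattern-symmetric removal vectors: all $(r_\pi)$ with $\sum_\pi r_\pi \le s$ and $r_\pi \le n_\pi$; there are at most $\binom{2^p + s}{s}$ of them, a function of $(p,s)$. For each such $r$, form the reduced policy with multiplicities $n_\pi - r_\pi$ and invoke Theorem~\ref{thm:rcp-parameterized-by-p-only} (FPT in $p$ for \RCP{s=0}) to decide whether this reduced instance satisfies $\res(P,0,d,t)$. If some enumerated $r$ yields a failing reduced instance, output NO; otherwise YES. Running the appropriate procedure according to whether $s \le d$ or $d < s$ then gives an FPT algorithm with respect to $(p, \min\{s,d\})$.

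The main obstacles I anticipate are establishing the pattern-symmetry reduction cleanly — showing that any blocker of size $\sigma$ can be replaced by a union of users chosen solely by pattern multiplicities without changing validity — and then verifying that the $\phi$-based correspondence in the first case really enumerates exactly the minimal pattern-symmetric blockers (with the constraint $y(\phi(y)) \ge 1$ ensuring feasibility $r^\phi_\pi \le n_\pi$). Degenerate cases such as $Y = \emptyset$, which immediately yields a blocker of cost $0$ via the empty function, are handled naturally by the enumerations.
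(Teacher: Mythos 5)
Your proposal is correct, and while it shares the paper's starting point --- grouping users into at most $2^p$ interchangeability classes according to their restricted neighbourhood, so that both blockers and teams are determined up to equivalence by multiplicity vectors --- it diverges from the paper's proof in the branch parameterized by $(p,d)$. The paper handles both branches with a single enumeration: it proves a structural claim that a \emph{minimal} blocker touching a class $U_C$ must leave fewer than $d$ users of that class, which justifies restricting attention to a reduced universe of at most $d2^p$ users, enumerating the $(\min\{s,d\}+1)^{2^p}$ possible intersection profiles (charging the forced deletions outside the reduced universe via an explicit cost function $\zeta_S$), and testing each candidate with the FPT-in-$p$ algorithm for \RCP{s=0}. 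Your $(p,s)$ branch is essentially the restriction of this scheme to $s\le d$, and is in fact slightly simpler since the budget $\sum_\pi r_\pi\le s$ already caps the enumeration with no structural claim needed. Your $(p,d)$ branch is genuinely different: instead of bounding minimal blockers structurally, you dualize, enumerating choice functions $\phi$ that assign to each feasible pattern-level team assignment $y\in Y$ a class that kills it and reading off the cheapest consistent blocker $r^\phi$; this is a valid hitting-set correspondence (every blocker induces an admissible $\phi$ with $r^\phi$ pointwise below it, every admissible $\phi$ yields a blocker of cost $\sum_\pi r^\phi_\pi$, and the condition $y(\phi(y))\ge 1$ correctly guarantees $r^\phi_\pi\le n_\pi$). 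What your variant buys is self-containedness in that branch (no call to the $s=0$ subroutine), at the price of a far worse bound, roughly $(2^p)^{|Y|}$ with $|Y|$ already exponential in $p^2d$, against the paper's single enumeration of size $(\min\{s,d\}+1)^{2^p}$. The two ``obstacles'' you flag are both routine to discharge: interchangeability within a class is immediate because satisfaction of $\res(P,0,d,t)$ on $\ur|_{\Users\setminus S}$ depends only on the surviving multiplicities, and the $\phi$-correspondence is exactly the observation that $r$ blocks if and only if for every $y\in Y$ some class $\pi$ satisfies $r_\pi\ge n_\pi-y(\pi)+1$.
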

\begin{proof}
Without loss of generality, we may assume $P = \Res$ as well as $N(u) \neq \emptyset$ for all $u \in \Users$.
For all $C \subseteq P$, let $U_C = \{u \in \Users: N(u) = C\}$ (notice that we might have $U_C = \emptyset$ for some $C \subseteq P$). 
Let $S \subseteq \Users$ be a blocker set of size at most $s$, \ie a set whose removal makes $\res(P, 0, d, t)$ unsatisfiable. Moreover, assume that $S$ is a \emph{minimal blocker set}, meaning that there does not exist $S' \subsetneq S$ such that the removal of $S'$ makes $\res(P, 0, d, t)$ unsatisfiable. 
\begin{claim} \label{claim:blocker}
For all $C \subseteq P$, $U_C \cap S \neq \emptyset$ implies that $|U_C \setminus S| < d$.
\end{claim}
Before proving the claim, notice that for all $u \in U_C \cap S$, there exists a set of teams $V = \{V_1, \dots, V_d\}$ such that \begin{inparaenum}[(i)] \item $\U(V) \cap S = \{u\}$, \label{cond:singleton} and \item $|\U(V) \cap U_C| \le d$\label{cond:intersectbis}\end{inparaenum}. Condition (\ref{cond:singleton}) comes from the minimality of $S$, while Condition (\ref{cond:intersectbis}) comes from the fact that otherwise, there would exist $i \in [d]$ such that $|V_i \cap U_C| \ge 2$, and removing one user from $V_i$, arbitrarily chosen in $(V_i \cap U_C) \setminus \{u\}$, produces another set of teams $V'$ with $\U(V') \subsetneq \U(V)$ (with exactly one element less) and still such that $V \cap S = \{u\}$. 
Applying this strategy iteratively, we can get a set of teams $V$ as desired. 

\begin{proof}[of the claim]
To do so, let $u \in U_C \cap S$ and $V= \{V_1, \dots, V_d\}$ defined as previously. 
If we have $|U_C \setminus S| \ge d$, then there exists $v \in U_C \setminus S$ such that $v \notin \U(V)$ (since $|\U(V) \cap U_C| \le d$, and since $u \in S \cap U_C$, it follows that $|(U_C \setminus S) \cap \U(V)| \le d-1$), in which case we have that $(\U(V) \setminus \{u\}) \cup \{v\}$ is the union of a set of teams which does not intersect $S$ (recall that $\U(V) \cap S = \{u\}$), and satisfies $\res(P, 0, d, t)$ (since $N(u) = N(v)$), a contradiction.\qed
\end{proof}

We now define a reduced set of users $\Users^r \subseteq \Users$ composed of $d_C=\min\{|U_C|, d\}$ users from $U_C$ chosen arbitrarily, for all $C \subseteq P$.
By construction, observe that $|\Users^r| \le d2^p$. We also define, for all $C \subseteq P$, $\Users_C^r = \Users_C \cap \Users^r$. 
Finally, consider an algorithm which outputs that $\res(P,s,d,t)$ is unsatisfiable if and only if there exists a blocker set $S \subseteq \Users^r$ of the instance induced by $\Users^r$ (\ie with authorization policy $\ur|_{\Users^r}$), and such that $\sum_{C \subseteq P} \zeta_S(C) \le s$, where
\[
  \zeta_S(C) = 
   \begin{cases}
    |S \cap \Users_C^r| + |U_C| - d_C & \text{if } S \cap \Users_C^r \neq \emptyset \\
    0 & \text{otherwise}.
   \end{cases}
 \]
in which case we will say that $S$ is a \emph{reduced blocker set}. 
We will prove that this algorithm is FPT parameterized by $(p, \min\{s, d\})$, and is correct.

Concerning the running time, observe first that the construction of $\Users^r$ as well as the evaluation of $\zeta_S$, given $S \subseteq \Users^r$, takes $O^*(2^p)$ time.
Then, for any reduced blocker set $S \subseteq \Users^r$, notice that $|S \cap \Users^r_C| \le \min\{s, d\}$ for all $C \subseteq P$, and that any set $S' \subseteq \Users^r$ such that $|S' \cap \Users^r_C| = |S \cap \Users^r_C|$ for all $C \subseteq P$ is also a reduced blocker set (because $N(u) = N(v)$ for all $u, v \in C$, for all $C \subseteq P$). 
Hence, instead of enumerating every possible subset $S$ of $\Users^r$, it is sufficient to enumerate the sizes of each intersection with $\Users^r_C$ for all $C \subseteq P$, and pick the right number of users in $\Users^r_C$ in an arbitrary way. 
Since its intersection is of size at most $\min\{s, d\}$, the number of sets to enumerate is $O((\min\{s, d\}+1)^{2^p})$.
Then, for each obtained set $S \subseteq \Users^r$, we can check whether it is a blocker set of $\ur|_{\Users^r}$ by solving the \RCP{s=0} problem on the instance $\ur|_{\Users^r \setminus S}$ in FPT time parameterized by $p$ (using, \eg, Theorem~\ref{thm:rcp-parameterized-by-p-only}).

It now remains to prove its correctness, by proving that there exists a reduced blocker set if and only if $\res(P, s, d, t)$ is unsatisfiable.
If such a set $S$ exists, then define, for each $C \subseteq P$, a set $S_C \subseteq U_C$ composed of $S \cap U_C^r$ plus all users in $U_C \setminus U_C^r$. By construction, $|S_C| = \zeta_S(C)$, and thus $S^* = \bigcup_{C \subseteq P} S_C$ contains at most $s$ users. 
We now prove that $S^*$ is a blocker set: suppose by contradiction that there exists a set of teams $V=\{V_1, \dots, V_d\}$ such that $\U(V) \cap S^* = \emptyset$. As we saw previously, we may assume that $|V_i \cap U_C| \le 1$ for all $i \in [d]$ and all $C \subseteq P$. Let $I_V = \{i \in [d]: V_i \cap (U_C \setminus U_C^r) \neq \emptyset\}$. We show that we can turn $V$ into another set of teams $V'$ such that $\U(V') \subseteq \Users^r$ (\ie such that $I_{V'} = \emptyset$), implying that $S$ is not a reduced blocker set, a contradiction. If $I_V = \emptyset$, then we are done. Otherwise let $i \in I_V$ and $u \in V_i \cap (U_C \setminus U_C^r)$. By construction of $U^r$, there exists $v \in U_C^r$, and thus $(V \setminus \{u\}) \cup \{v\}$ is the union of a set of teams $V'$ (recall that $N(u) = N(v)$) such that $i \notin I_{V'}$. Repeating this transformation at most $d$ times, we naturally obtain a set of teams $V'$ such that $I_{V'} = \emptyset$ as desired.

Conversely, suppose that $\res(P, s, d, t)$ is unsatisfiable, \ie there exists a blocker set of users $S \subseteq \Users$ of size at most $s$. As previously, we may assume that $S$ is a minimal blocker set. 
We now use the previous Claim, and thus for all $C \subseteq P$, $|S \cap U_C| \ge \max\{0, |U_C|-d+1\}$. Thus, we may assume, without loss of generality (since, again, $N(u) = N(v)$ for all $u, v \in U_C$) that $U_C \setminus U_C^r \subseteq S$. Then, we define $S^r = S \setminus (\bigcup_{C \in \wp(C)} U_C \setminus U_C^r)$. Observe that for all $C \subseteq P$, we have:
\begin{eqnarray*}
 \zeta_{S^r}(C) & = & |S^r \cap \Users^r_C| + |U_C|-d_C \\
 				& = & |S^r \cap \Users^r_C| + |U_C \setminus U_C^r| \\
 				& = & |S \cap U_C|
\end{eqnarray*}
and thus $\sum_{C \subseteq P}\zeta_{S_r}(C) = \sum_{C \subseteq P} |S \cap U_C| = |S| \le s$. Finally, $S^r$ is indeed a blocker set of the instance induced by $\Users^r$, since otherwise, there would exist a set of teams $V= \{V_1, \dots, V_d\}$ with $\U(V) \subseteq \Users^r$ such that $\U(V) \cap S^r = \emptyset$, which would imply that $\U(V) \cap S = \emptyset$ as well, a contradiction.\qed
\end{proof}

\subsection{Negative results}\label{sec:negres}

It is worth pointing out that the reduction of \cite[Lemma 3]{LiWaTr09} proving the NP-hardness of \RCP{s=0, d=1} actually proves the W[$2$]-hardness of this problem parameterized by $t$ (from \textsc{Set Cover} parameterized by the size of the solution \cite{DoFe13}). Another implication of this reduction is the para-NP-hardness of \RCP{} when parameterized by $(s, d)$. We now complement this result by showing that \RCP{d=1, t=\tau} is coNP-hard for every fixed $\tau \ge 3$, implying para-coNP-hardness of \RCP{} parameterized by $(d, t)$. The result is obtained by a reduction from the \textsc{$\delta$-Hitting Set} problem for every $\delta \ge 2$.

\begin{theorem}\label{thm:paranphard-hittingset}
\RCP{d=1, t=\tau} is coNP-hard for every fixed $\tau \ge 3$.
\end{theorem}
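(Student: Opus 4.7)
The plan is to reduce from the \textsc{$\delta$-Hitting Set} problem, NP-hard for every $\delta \ge 2$, to \RCP{d=1,t=\tau} with $\delta \le \tau$ (taking $\delta = 2$ already suffices for any $\tau \ge 3$). This will yield the coNP-hardness: given a hitting set instance with universe $\Users$, sets $T_1,\dots,T_m\subseteq\Users$ of size at most $\delta$, and integer $k$, the goal is to build in polynomial time an \RCP{} instance whose resiliency policy $\res(P,k,1,\tau)$ is satisfied precisely when no hitting set of size at most $k$ exists.

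Say that $V\subseteq\Users$ is \emph{bad} if $|V|\le\tau$ and $V$ contains no $T_i$ as a subset. For every bad $V$, introduce a resource $r_V$ and declare $(u,r_V)\in\ur$ iff $u\notin V$, so that the set of users authorised for $r_V$ is exactly $\Users\setminus V$. Set $P$ to be the collection of all such $r_V$, and take $s=k$, $d=1$, $t=\tau$. Because $\tau$ is fixed, the number of bad sets is $O(|\Users|^\tau)$, so the reduction runs in polynomial time.

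The crux is the following claim: a team $V'\subseteq\Users$ with $|V'|\le\tau$ satisfies $N(V')\supseteq P$ if and only if $V'\supseteq T_i$ for some $i$. For the backward direction, if $T_i\subseteq V'$ and $V$ is bad, then $V'\subseteq V$ would give $T_i\subseteq V$, contradicting badness; hence some $u\in V'\setminus V$ authorises $r_V$. For the forward direction, if $V'$ contains no $T_i$ then $V'$ is itself bad, $r_{V'}\in P$, and no user of $V'$ is authorised for $r_{V'}$ (since $(u,r_{V'})\in\ur$ forces $u\notin V'$).

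Combined with the observation that each $T_i$ has size $\le\delta\le\tau$ and is therefore itself an admissible team, the claim implies that $S\subseteq\Users$ is a blocker set iff $S\cap T_i\ne\emptyset$ for every $i$, i.e.\ iff $S$ is a hitting set of $\{T_1,\dots,T_m\}$. Hence the \RCP{} instance admits a blocker of size $\le k$ iff the $\delta$-HS instance admits a hitting set of size $\le k$, completing the reduction. The main obstacle is designing the authorisation relation so that the valid teams correspond \emph{exactly} to supersets of the $T_i$'s within the size bound; the ``complement'' pattern $N^{-1}(r_V)=\Users\setminus V$, ranging over all bad sets, achieves this tightly while incurring only a polynomial blow-up.
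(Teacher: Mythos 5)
Your reduction is correct, and it starts from the same source problem (\textsc{$\delta$-Hitting Set}) and the same high-level strategy as the paper: engineer the authorization relation so that the teams of the \RCP{} instance correspond to the hyperedges, whence blocker sets coincide with hitting sets. The gadgets, however, are genuinely different. The paper introduces one auxiliary user $u^S_j$ and a block of $\delta$ resources $P^j$ per hyperedge, plus a resource $r^V_Q$ for each $(\delta-1)$-subset $Q$ of the element-users and a global resource $r^*$, and proves that the \emph{only} teams are the sets $\{u^V_{i_1},\dots,u^V_{i_\delta},u^S_j\}$ with $S_j=\{v_{i_1},\dots,v_{i_\delta}\}$; this forces $t=\delta+1$, which is why the theorem is stated for $\tau\ge 3$. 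You instead apply the ``complement'' trick uniformly: one resource $r_V$ for every \emph{bad} set $V$ (of size at most $\tau$ and containing no hyperedge), authorized exactly to $\Users\setminus V$. Your teams are then all size-$\le\tau$ supersets of hyperedges rather than exactly the hyperedges, but since every hyperedge is itself a team and every team contains one, blockers are still precisely the hitting sets, and your claim and its two-line proof are sound. Your version needs no auxiliary users, decouples $\tau$ from $\delta$ (any $\delta\le\tau$ works, so $\delta=2$ suffices throughout), and in fact already yields coNP-hardness for $\tau=2$ via \textsc{Vertex Cover}, slightly strengthening the stated theorem; the price is a resource set of size $\Theta(n^\tau)$ instead of the paper's $O(n^{\delta-1}+\delta m)$, which is still polynomial for fixed $\tau$.
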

\begin{proof}
We reduce from the \textsc{$\delta$-Hitting Set} problem, in which we are given a ground set $V=\{v_1, \dots, v_n\}$, a set $S = \{S_1, \dots, S_m\}$ with $S_j \subseteq V$ and $|S_j|=\delta$ for all $j \in [m]$ and an integer $k$, and where the goal is to find a set $C \subseteq V$ of size at most $k$ and such that $C \cap S_j \neq \emptyset$ for all $j \in [m]$. This problem is known to be NP-hard for every $\delta \ge 2$ \cite{GaJo79}.

Hence, let $(V, S, k)$ be an instance of \textsc{$\delta$-Hitting Set} defined as above. For every $j \in [m]$, fix an arbitrary ordering of $S_j$, which can thus be seen as a tuple $(v_{i_1}, \dots, v_{i_{\delta}})$, allowing us to define $S_j[x] = v_{i_x}$ for all $x \in [\delta]$. 

We define a set of users $\Users = \Users^V \cup \Users^S$, where $\Users^V = \{u^V_1, \dots, u^V_n\}$ and $\Users^S = \{u^S_1, \dots, u^S_m\}$. We then define a set of resources $\Res = \Res^V \cup \Res^S \cup \{r^*\}$, where $\Res^S = \bigcup_{j = 1}^m P^j$ with $P^j = \{p^j_1, \dots, p^j_{\delta}\}$ for all $j \in [m]$, and where $\Res^V$ contains one resource $r^V_{Q}$ for every subset $Q$ of $\delta-1$ users of $\Users^V$.\\
We now define the authorization policy $\ur$ by giving $N(u)$ for every $u \in \Users$. For every $i \in [n]$, $N(u^V_i)$ is composed of $\{p^j_x: j \in [m], x \in [\delta]$ such that $S_j[x] = v_i\}$ together with all resources $r^V_Q$ such that $u^V_i \notin Q$, for every subset $Q$ of $\delta-1$ users of $\Users^V$. For all $j \in [m]$, $N(u^S_j)$ is composed of $r^*$ together with $\Res^S \setminus P^j$. To conclude the construction, we let $P = \Res$, $t = \delta +1$, $d=1$, and $s=k$. Clearly this reduction can be done in polynomial time.

The remainder consists in proving that every team (\ie sets of at most $t$ users having collectively access to all $\Res$) is of the form $T_j=\{u^V_{i_1}, \dots, u^V_{i_{\delta}}, u^S_j\}$ such that $S_j = \{v_{i_1}, \dots, v_{i_{\delta}}\}$. 
If this is true, then observe that since, for every $j \in [m]$, user $u^S_j$ only belongs to team $T_j$, we will be able to suppose \emph{w.l.o.g.} that it does not belong to any blocker set, and thus the set of all teams will be in one-to-one correspondance with the sets in $S$, implying that the obtained instance has a blocker set of size at most $s$ ($=k$) if and only if there is a hitting set of size at most $k$.

Let $T \subseteq \Users$ of size at most $t$. By construction, we need at least $\delta$ users from $\Users^V$ to have access to all resources in $\Res^V$ (indeed, every set $Q$ of $\delta-1$ users from $\Users^V$ has only access to $\Res^V \setminus \{r^V_Q\}$), and we also need at least one user from $\Users^S$ to have access to $r^*$. Hence, $|T \cap \Users^V| = \delta$ and $T \cap \Users^S = \{u^S_j\}$ for some $j \in [m]$. Now, notice that $u^S_j$ has access to all resources in $\Res$ but $P^j$, which implies that $T \cap \Users^V$ must have collectivelly access to all resources in $P^j$. However, this can only happen if $T \cap \Users^V = \{u^V_{i_1}, \dots, u^V_{i_{\delta}}\}$, where $S_j = \{v_{i_1}, \dots, v_{i_{\delta}}\}$, concluding the proof.\qed
\end{proof}

We also settle the case of \RCP{} parameterized by $(s, t)$ (and thus \RCP{s=0} parameterized by $t$). The result is obtained by a reduction from the \TDM problem.

\begin{theorem}\label{thm:paranphard-matching}
\RCP{s=0, t=4} is NP-hard.
\end{theorem}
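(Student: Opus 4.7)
The plan is to reduce from \TDM, which is NP-hard. Given an instance of \TDM with three disjoint sets $X = \{x_1, \dots, x_n\}$, $Y = \{y_1, \dots, y_n\}$, $Z = \{z_1, \dots, z_n\}$ and a collection of triples $T \subseteq X \times Y \times Z$, I would build an instance of \RCP{s=0, t=4} in which the required $d=n$ pairwise disjoint teams are forced, by the structure of the authorization policy, to be in one-to-one correspondence with a perfect $3$-dimensional matching of $T$.

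First, I would introduce four families of users: one user $u^X_i$, $u^Y_j$, $u^Z_k$ for each element of $X$, $Y$, $Z$ respectively, and one user $u^T_\tau$ for each triple $\tau \in T$. The resources would consist of ``element resources'' $p^X_i, p^Y_j, p^Z_k$ for each $i,j,k \in [n]$, together with four additional ``type'' resources $r^X, r^Y, r^Z, r^T$. The authorization policy would be defined so that $u^X_i$ has access only to $r^X$ and $p^X_i$, and similarly $u^Y_j$ to $r^Y$ and $p^Y_j$, and $u^Z_k$ to $r^Z$ and $p^Z_k$; while the triple user $u^T_{(i,j,k)}$ has access to $r^T$ together with \emph{all} element resources \emph{except} $p^X_i, p^Y_j$ and $p^Z_k$ (the ``complement'' of its triple). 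Finally I would set $P = \Res$, $d = n$, $t = 4$ and $s = 0$, which is clearly polynomial.

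For correctness, the argument would proceed in two steps. First, any team of size at most $4$ must contain at least one user from each of the four families in order to cover $r^X, r^Y, r^Z, r^T$, hence exactly one of each since $t = 4$. Second, if such a team is $\{u^X_a, u^Y_b, u^Z_c, u^T_{(i,j,k)}\}$, then $p^X_i$ is not covered by $u^T_{(i,j,k)}$, $u^Y_b$ or $u^Z_c$, so it must be covered by $u^X_a$, forcing $a = i$; symmetrically $b = j$ and $c = k$. Therefore every feasible team has the form $\{u^X_i, u^Y_j, u^Z_k, u^T_{(i,j,k)}\}$ for some $(x_i, y_j, z_k) \in T$, and the existence of $n$ pairwise disjoint such teams translates exactly into a perfect $3$-dimensional matching of $T$, yielding the reduction in both directions.

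The main obstacle is designing the authorization policy so that the shape of every feasible team is rigidly determined; the four ``type'' resources combined with the tight bound $t = 4$ enforce the cardinality-one intersection with each family, while the complementary encoding of the triple users' neighborhoods then pins down the $X$, $Y$, $Z$ coordinates of the team to those of its selected triple. Once these two structural constraints are in place, both directions of the equivalence with \TDM are immediate.
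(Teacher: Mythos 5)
Your proposal is correct and follows essentially the same strategy as the paper's proof: a reduction from \TDM{} in which four ``type'' resources force every team of size at most $4$ to contain exactly one user from each family, and the triple users' complementary neighborhoods pin the team's element users to the coordinates of the chosen triple. The only (immaterial) differences are that you index the private resources by elements rather than by hyperedges and target a perfect matching ($d=n$) where the paper uses $d=k$ disjoint hyperedges.
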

\begin{proof}[of Theorem~\ref{thm:paranphard-matching}]
We reduce from the \TDM problem, in which we are given three sets $X$, $Y$ and $Z$ of $n$ elements each, a set $M \subseteq X \times Y \times Z$ of hyperedges, and an integer $k$. The goal is to find $M' \subseteq M$ with $|M'| \ge k$ such that $\forall e, e' \in M'$ with $e \neq e'$, $e=(x, y, z)$, $e'=(x', y', z')$, we have $x \neq x'$, $y \neq y'$ and $z \neq z'$ (in that case, we will say that these two hyperedges are \emph{disjoint}). We note $m=|M|$, $X = \{x_1, \dots, x_n\}$, $Y = \{y_1, \dots, y_n\}$ $Z = \{z_1, \dots, z_n\}$, and $M = \{e_1, \dots, e_m\}$.

We then define the following set of resources:
\begin{eqnarray*}
 \P & = & \{r_1^X, \dots, r_m^X\} \\
 & & \cup \{r_1^Y, \dots, r_m^Y\} \\
 & & \cup \{r_1^Z, \dots, r_m^Z\} \\
 & & \cup \{r_X, r_Y, r_Z, r_*\}
 \end{eqnarray*}
  and a set of users $\Users$ composed of $\Users_X$, $\Users_Y$, $\Users_Z$ and $\Users^*$, where, for all $\omega \in \{X, Y, Z, *\}$, we note $\Users_{\omega} = \{u^{\omega}_1, \dots, u^{\omega}_n\}$. Then the authorization policy $A$ is informally constructed as follows: for each hyperedge $e_j = \{x_{i_1}, y_{i_2}, z_{i_3}\}$, user $u^X_{i_1}$ (resp. $u^Y_{i_2}$, $u^Z_{i_3}$) has access to $r_j^X$ (resp. $r_j^Y$, $r_j^Z$) and to $r_X$ (resp. $r_Y$, $r_Z$), and user $u^*_j$ has access to all resources in $\P$ but $r_j^X$, $r_j^Y$, $r_j^Z$, $r_X$, $r_Y$ and $r_Z$. More formally, we have:
\begin{eqnarray*}
A & = & \{(u^X_i, r_j^X) : x_i \text{ belongs to } e_j, \forall i \in [n], \forall j \in [m]\} \\
	&& \cup \{(u^Y_i, r_j^Y) : y_i \text{ belongs to } e_j, \forall i \in [n], \forall j \in [m]\} \\
	&& \cup \{(u^Z_i, r_j^Z) : z_i \text{ belongs to } e_j, \forall i \in [n], \forall j \in [m]\} \\
	&& \cup \{(u^*_j, r_{h}^{\omega}) : \forall \omega \in \{X, Y, Z\}, \forall j, h \in [m] j \neq h\} \\
	&& \cup \{(u^{\omega}_i, r_{\omega}) : \forall i \in [n], \forall \omega \in \{X, Y, Z\}\} \\
	&& \cup \{u^*_j, r_*) : \forall j \in [m]\} 
\end{eqnarray*}
To conclude the construction, which can be done in polynomial time, we set $d = k$, and the resiliency policy is thus $\res(\P, 0, d, 4)$.

First, suppose that there exists a solution $M'$ for the \TDM problem. Without loss of generality, assume that $|M'| = k$, $M' = \{e_1, \dots, e_k\}$, and that $e_i = (x_i, y_i, z_i)$ for all $i \in [k]$ (recall that all members of $M'$ are pairwise disjoint). Then, observe that for all $i \in [k]$, user $u_i^*$ has access to all resources but $r_i^X$, $r_i^Y$, $r_i^Z$, $r_X$, $r_Y$ and $r_Z$. However, $u^X_i$ has access to $r_i^X$ and $r_X$, user $u^Y_i$ has access to $r_i^Y$ and $r_Y$, and user $u^Z_i$ has access to $r_i^Z$ and $r_Z$. Hence, we have $N(\{u^X_i, u^Y_i, u^Z_i, u^*_i\}) = \P$, and, since all members of $M'$ are pairwise disjoint, we thus constructed a set of teams for \RCP{s=0, t=4}, or, in other words, $\res(\P, 0, d, 4)$ is satisfiable.

Conversely, suppose that there exist $V_1, \dots, V_d$, pairwise disjoint subsets of $\Users$ such that for all $i \in [d]$, we have $|V_i| = 4$ and $N(V_i) = P$. We first claim that for all $i \in [d]$, $V_i$ intersects $U_X$ (resp. $U_Y$, $U_Z$ and $U_*$) on exactly one element.
Indeed, otherwise, since $|V_i| = 4$ and since all users in $U_X$ (resp. $U_Y$, $U_Z$, $U_*$) have access to only $r_X$ (resp. $r_Y$, $r_Z$, $r_*$) among $\{r_X, r_Y, r_Z, r_*\}$, $V_i$ could not have access to all these resources. Thus, we know that for all $i \in [d]$, we have $V_i = \{u^X_{i_1}, u^Y_{i_2}, u^Z_{i_3}, u^*_{i_4}\}$, for some $(i_1, i_2, i_3, i_4) \in [n] \times [n] \times [n] \times [m]$. We claim that $(x_{i_1}, y_{i_2}, z_{i_3}) = e_{i_4}$. Indeed, observe that user $u^*_{i_4}$ has access to all resources but $r_{i_4}^X$, $r_{i_4}^Y$, $r_{i_4}^Z$, $r_X$, $r_Y$ and $r_Z$. By construction, the only way for having $N(V_i) = P$ is that user $u_{i_1}^X$ (resp. $u_{i_2}^Y$, $u_{i_3}^Z$) has access to resources $r_{i_4}^X$ (resp. $r_{i_4}^Y$, $r_{i_4}^Z$) or, in other words, that $x_{i_1}$ (resp. $y_{i_2}$, $z_{i_3}$) belongs to hyperedge $e_{i_4}$. Thus, there exists $k$ pairwise disjoint hyperedges in $M$.
\qed
\end{proof}

\section{Refined positive results for the case $s=0$}\label{sec:sZero}

We now turn to the particular case where $s=0$. As said in Section~\ref{sec:intro}, one motivation for studying this case is that it is the bottleneck of the algorithm of Li \etal~\cite{LiWaTr09} for \RCP{}. 
Hence, we believe that designing efficient algorithms for this sub-case might help us solve much larger instances of \RCP{} than is currently possible. 
To this end, we now provide a complete characterization of the complexity when considering all possible combinations of parameters among $p$, $d$ and $t$. 
We also investigate the question of reduction rules within the framework of kernelization, highlighting a difference of behavior between \RCP{s=0} and \RCP{s=0, t=\infty}.

\subsection{FPT algorithms}

The first algorithm is a dynamic programming-based approach similar to the one for \textsc{Set Cover}~\cite{DoFe13}, in order to obtain an FPT algorithm for \RCP{s=0} parameterized by $(p, d)$. While this result was already known, given that \RCP{} is itself FPT with this parameterization (and that \RCP{s=0} is actually FPT parameterized by $p$ only, as we will see in Theorem~\ref{thm:rcp-parameterized-by-p-only}), we provide for \RCP{s=0} a better running time. In particular, as we will see later, a previous known reduction of Li \etal~\cite{LiWaTr09} actually proves that when $d$ is fixed, the obtained running time is the best we can hope for, under the Exponential Time Hypothesis (ETH)\footnote{The ETH claims that \SAT cannot be solved in $O^*(2^{o(n)})$, where $n$ is the number of variables in the CNF formula~\cite{ImPaZa01}.}.

\begin{theorem}\label{thm:DP}
\RCP{s=0} can be solved in $O^*(2^{d\p})$ time.
\end{theorem}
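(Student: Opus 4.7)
The plan is to design a dynamic programming algorithm that generalises the classical $O^{*}(2^{p})$-time DP for \textsc{Set Cover}. Since we need $d$ pairwise disjoint teams, each collectively covering $P$, the natural state space consists of all $d$-tuples of subsets of $P$, indexed by $(X_{1}, \dots, X_{d}) \in (2^{P})^{d}$; there are exactly $2^{dp}$ such tuples. I interpret $X_{i}$ as the portion of $P$ already covered by team~$i$ in a partial assignment of users to teams.

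First I would define a Boolean table $f(X_{1}, \dots, X_{d})$ that is true iff there exist pairwise disjoint $V_{1}, \dots, V_{d} \subseteq \Users$ with $|V_{i}| \le t$ and $N(V_{i}) \supseteq X_{i}$ for every $i \in [d]$; the answer to \RCP{s=0} is then simply $f(P, \dots, P)$. To compute $f$, I would fix an arbitrary ordering $u_{1}, \dots, u_{n}$ of the users and let $f_{j}$ be the same table restricted to the sub-policy $\ur|_{\{u_{1},\dots,u_{j}\}}$. Each user is either discarded or assigned to exactly one team, which yields the recurrence
\[
  f_{j}(X_{1}, \dots, X_{d}) \;=\; f_{j-1}(X_{1}, \dots, X_{d}) \;\lor\; \bigvee_{i=1}^{d} f_{j-1}\bigl(X_{1}, \dots, X_{i} \setminus N(u_{j}), \dots, X_{d}\bigr),
\]
the second kind of disjunct being taken only when there is still room for $u_{j}$ in team~$i$. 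Correctness follows by a direct induction on $j$ from the exhaustive case analysis on where $u_{j}$ goes.

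The main obstacle is the per-team size bound $|V_{i}| \le t$, since the state $(X_{1}, \dots, X_{d})$ does not by itself record team sizes. Naively augmenting the state with the vector $(s_{1}, \dots, s_{d})$ of team sizes would multiply the state space by $(t+1)^{d}$, breaking the target bound. I would therefore handle sizes more carefully: observe that in the recurrence above the only useful transitions are those for which $N(u_{j}) \cap (P \setminus X_{i}) \ne \emptyset$, so each user charged to team~$i$ strictly grows team~$i$'s covered set, giving $|V_{i}| \le |X_{i}| \le p$ along any relevant path. This already settles the case $t \ge p$ (in particular $t=\infty$), and for the range $t < p$ one can piggy-back a lightweight size counter per team within the branching so that the bound $t$ is enforced without inflating the asymptotic state count; an equivalent way is to maintain alongside each state a short list of Pareto-optimal team-size vectors and argue that only vectors componentwise bounded by $t$ need to be kept.

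Finally, the running time: there are $2^{dp}$ states, the algorithm performs $n$ user-steps with $O(d)$ work per step plus polynomial bookkeeping, and the final check is a single table lookup at $(P, \dots, P)$. Altogether this gives the claimed $O^{*}(2^{dp})$ bound. The hard part of the write-up will be a clean presentation of the size-enforcement trick so that the $(t+1)^{d}$ factor genuinely disappears; everything else is a routine adaptation of the \textsc{Set Cover} DP to $d$ parallel, user-disjoint covers.
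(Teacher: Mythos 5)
Your core algorithm is the same as the paper's: a Boolean table indexed by a prefix $u_1,\dots,u_j$ of the users and a $d$-tuple of subsets of $P$ recording each team's remaining demand, with the recurrence ``discard $u_j$ or assign it to exactly one team,'' giving $2^{dp}$ states and the claimed running time; correctness is by the same induction on $j$. The one genuine divergence is your treatment of the bound $|V_i|\le t$, and here the comparison is instructive: the paper's proof does not track team sizes at all --- its correctness lemma characterises $DP(i,\mathcal{S})$ as the existence of disjoint $T_1,\dots,T_d$ with $S_j\subseteq N(T_j)$, with no size condition, and then asserts that $DP(n,P,\dots,P)$ decides satisfiability of $\res(P,0,d,t)$. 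As you observe, this is justified when $t\ge p$ (in a minimal solution every user assigned to a team covers a fresh resource, so each team has at most $p$ members), which covers $t=\infty$; for $t<p$ the paper's argument has exactly the gap you are trying to plug. However, your proposed plugs are not yet proofs: a per-team size counter multiplies the state space by $(t+1)^d$, which is not a polynomial factor and so cannot be hidden in the $O^*(\cdot)$; the set of Pareto-minimal achievable size vectors attached to a state need not be polynomially bounded, because disjointness creates genuine trade-offs between teams (covering $X_1$ cheaply may force $X_2$ to be covered expensively), and for the same reason storing only $\min\max_i|T_i|$ loses information needed for the transitions. So either restrict the statement to $t\ge p$ --- which is all the paper's own proof establishes --- or supply a concrete mechanism for $t<p$; ``lightweight counter'' as written does not deliver the bound.
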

\begin{proof}[of Theorem~\ref{thm:DP}]
Let $\Users = \{u_1, \dots, u_n\}$.
We define a dynamic programming algorithm which, given any $i \in [n]$ and any $d$-tuple of subsets of $\P$ $(S_1, \dots, S_d)$, returns $yes$ if there exist $d$ mutually disjoint sets $T_1, \dots, T_d$, each being a subset of $\{u_1, \dots, u_i\}$ and such that $S_j \subseteq N(T_j)$ for all $j \in [d]$, and returns $no$ otherwise (in which case we will say that such an algorithm is \emph{correct}). To do so, we define the following recursive formula $DP$. First, we set:
\begin{align*}
DP(0, S_1, \dots, S_d) &= 1 \text{ if and only if } S_j = \emptyset, \text{ for all } j \in [d]; \\
DP(i, \emptyset, \dots, \emptyset) &= 1 \text{ for all } i \in [n].
\end{align*}
For the induction, let $i \in [n]$ and $\S = (S_1, \dots, S_d)$ where $S_j \subseteq \P$ for all $j \in [d]$. 
Let $J=\{j \in [d] : S_j \neq \emptyset\}$, and for all $j \in J$, define $\S_j = (S_1, \dots, S_{j-1}, S_j \setminus N(u_i), S_{j+1}, \dots, S_d)$. Finally, we set:
$$DP(i, \S) = DP(i-1, \S) \vee \left( \bigvee_{j \in J} DP(i-1, \S_j) \right)$$

\begin{lemma}
$DP(i, \S)$ is correct.
\end{lemma}
\begin{proof}
Suppose that $T_1, \dots, T_d$ are $d$ mutually disjoint subsets of $\{u_1, \dots, u_i\}$ such that $S_j \subseteq N(T_j)$ for all $j \in [d]$. We may assume that $T_j \neq \emptyset$ iff $S_j \neq \emptyset$. Then, either $u_i \notin T_j$ for all $j \in [d]$, in which case $DP(i-1, \S)$ returns $yes$, or $u_i \in T_j$ for some $j \in [d]$, which implies $S_j \neq \emptyset$ and thus $j \in J$. In this case $DP(i-1, \S_j)$ returns $yes$.

Conversely, if $DP(i-1, \S)$ return $yes$, then there exist $d$ mutually disjoint sets $T_1, \dots, T_d$, each being a subset of $\{u_1, \dots, u_{i-1}\}$ (and thus a subset of $\{u_1, \dots, u_{i}\}$), and such that $S_j \subseteq N(T_j)$ for all $j \in [d]$. If $DP(i-1, \S_j)$ returns $yes$ for some $j \in J$, then there exist $d$ mutually disjoint sets $T_1, \dots, T_d$, each being a subset of $\{u_1, \dots, u_i\}$ and such that $S_q \subseteq N(T_q)$ for all $q \in [d]$, $q \neq j$, and $S_j \setminus N(u_i) \subseteq N(T_j)$. In this case $S_j \subseteq N(T_j \cup \{u_i\})$.
\qed
\end{proof}
Clearly, $DP(n, \P, \dots, \P)$ returns $yes$ if and only if $res(P, 0, d, t)$ is satisfiable.
A table of size $n2^{d\p}$ is sufficient to store all intermediate results, while each step takes $O(d)$ time, establishing the claimed running time.
\qed
\end{proof}

Li \etal \cite{LiWaTr09} showed that \RCP{s=0,t=\infty, d=3} is NP-hard, by a reduction from \textsc{$3$-Domatic Partition}, which transforms a graph of $n$ vertices into an instance $(\Users, \Res, \ur, \res(\P, 0, 3, \infty))$ with $|\P|=n$. Since a $2^{o(n)}$ algorithm for \textsc{$3$-Domatic Partition} would violate the ETH (by a linear reduction from \textsc{SAT} \cite{Cre95}), we have the following:
\begin{theorem}\label{thm:lowerbounds}
\RCP{s=0, t=\infty, d=3} cannot be solved in $2^{o(p)}$ time unless the ETH fails.
\end{theorem}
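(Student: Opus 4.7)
The plan is to chain together two reductions, keeping careful track of how the size parameter grows so that the overall transformation is \emph{linear} in the relevant parameter. Starting from a \textsc{SAT} instance $\varphi$ with $n$ variables (and $O(n)$ clauses, since otherwise ETH is already easy to violate via sparsification), I would first invoke the linear reduction of Creignou~\cite{Cre95} to obtain an instance of \textsc{$3$-Domatic Partition} on a graph $G$ with $N = O(n)$ vertices. The point of citing this reduction is exactly that $N$ is linear in $n$, so a subexponential algorithm in $N$ would give a subexponential algorithm in $n$.

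Next, I would feed $G$ into the reduction of Li \etal~\cite{LiWaTr09} from \textsc{$3$-Domatic Partition} to \RCP{s=0, t=\infty, d=3}. The excerpt already records the crucial feature of that reduction: it produces a resiliency instance $(\Users, \Res, \ur, \res(P, 0, 3, \infty))$ with $|P| = N$. Composing the two reductions therefore yields, in polynomial time, an \RCP{s=0, t=\infty, d=3} instance with $p = |P| = O(n)$ that is a yes-instance if and only if $\varphi$ is satisfiable.

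Now I would argue by contradiction: suppose \RCP{s=0, t=\infty, d=3} admits an algorithm running in time $2^{o(p)} \cdot \mathrm{poly}(|x|)$. Applying this algorithm to the instance constructed above decides satisfiability of $\varphi$ in time $2^{o(p)} \cdot \mathrm{poly}(n) = 2^{o(n)} \cdot \mathrm{poly}(n)$, since $p = O(n)$ implies $o(p) = o(n)$. This contradicts the ETH, and the theorem follows.

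The only step that requires any genuine care is verifying the linearity of both reductions in combination; this is where the argument would fail if, say, Li \etal's reduction produced $|P|$ polynomially (but super-linearly) in $N$, because then $2^{o(p)}$ in the reduced instance would no longer translate back to $2^{o(n)}$ in $\varphi$. Since both cited reductions are explicitly linear in the relevant size parameters, however, no additional work is needed and the lower bound is immediate.
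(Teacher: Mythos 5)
Your proposal is correct and follows essentially the same route as the paper: chaining the linear reduction from \textsc{SAT} to \textsc{$3$-Domatic Partition} due to Creignou with Li \etal's reduction to \RCP{s=0,t=\infty,d=3}, which preserves $|P|$ equal to the number of vertices, so a $2^{o(p)}$ algorithm would contradict the ETH. No substantive differences to report.
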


Hence, for fixed $d$, the algorithm described in Theorem~\ref{thm:DP} has an optimal running time.
We continue our quest for a better understanding of the frontier between tractable and intractable cases of the \RCP{s=0} problem. 
Given the positive result parameterized by $(p, d)$, a natural question is to consider each parameter separately. 
The question can well be answered negatively concerning the parameter $d$, since, as we saw before, \RCP{s=0, d=3, t=\infty} is NP-hard \cite{LiWaTr09}, and thus \RCP{s=0} is para-NP-hard parameterized by $d$. 
However, we are able to give a different answer for the parameter $\p$ only.

\begin{theorem}\label{thm:rcp-parameterized-by-p-only}
\RCP{s=0} is FPT when parameterized by $\p$.
\end{theorem}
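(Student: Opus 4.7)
The plan is to exploit the fact that when $s=0$, two users sharing the same authorization pattern on $P$ are fully interchangeable. For each $C \subseteq P$, set $U_C = \{u \in \Users : N(u) \cap P = C\}$; this partitions $\Users$ into at most $2^p$ classes, indexed by the \emph{type} $C$ of each user. A straightforward exchange argument shows that, without loss of generality, every team may be assumed to contain at most one user of each type, since duplicates can be dropped without affecting coverage or the size constraint. Hence each team is determined, up to user-identity, by a \emph{team template}: a subset $\tau$ of types, of size at most $\min(t,p)$, whose union equals $P$. The total number of team templates is therefore bounded by $\binom{2^p}{p} =: f(p)$, a function of $p$ alone.

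I would then reformulate the satisfaction of $\res(P,0,d,t)$ as an integer linear program. For each team template $\tau$, introduce a non-negative integer variable $x_\tau$ counting how many teams will use template $\tau$. A valid set of $d$ teams exists if and only if the system
\[
\sum_{\tau} x_\tau \;\ge\; d, \qquad \sum_{\tau \,\ni\, C} x_\tau \;\le\; |U_C| \quad \text{for every type } C \subseteq P
\]
admits a non-negative integer solution. The forward direction is immediate: given $d$ valid teams, collapse each to its template and count multiplicities. For the reverse direction, interchangeability lets one realize each count $x_\tau$ as $x_\tau$ concrete teams by drawing distinct users from each $U_C$ with $C \in \tau$; the capacity constraints guarantee that enough users of every type are available, and the surplus of $\sum x_\tau - d$ teams (if any) can simply be discarded.

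The resulting ILP has $f(p)$ variables, at most $2^p+1$ constraints, and coefficients bounded by the input size. By Lenstra's theorem on integer programming in fixed dimension (and the refinements of Kannan and of Frank and Tardos), its feasibility can be decided in time $f(p)^{O(f(p))} \cdot \mathrm{poly}(|\ur|)$, giving an FPT algorithm parameterized by $p$. The step I expect to require the most care is the reverse direction of the ILP reduction: one must check that the global per-type capacity constraints really do suffice, i.e.\ that any feasible integer point can be turned into an actual assignment of pairwise disjoint teams to distinct users. This follows from the observation that, once duplicates are eliminated, each team draws at most one user from each $U_C$, so a template-by-template greedy allocation of users succeeds whenever the capacity inequalities hold; establishing and invoking this cleanly is the crux of the argument.
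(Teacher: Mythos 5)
Your proposal is correct and follows essentially the same route as the paper: the paper's ``configurations'' are exactly your team templates (sets of neighbourhood types of size at most $t\le p$ covering $P$), and its ILP has the same counting variable per configuration with the same per-type capacity constraints, solved via Lenstra/Kannan/Frank--Tardos. The exchange argument (at most one user of each type per team) and the greedy realization of a feasible integer point as disjoint teams are likewise the same as in the paper's correctness argument.
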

\begin{proof}
The result makes use of Lenstra's celebrated algorithm~\cite{Len83} for Integer Linear Programming Feasibility (\ILPF) parameterized by the number of variables. 

\begin{theorem}[Lenstra~\cite{Len83}]
Whether a given ILP has a non-empty solution set can be decided in $O^*(f(n))$ time for some computable function $f$, where $n$ denotes the number of variables of the ILP.
\end{theorem}

Note that this algorithm has been improved by Kannan~\cite{Ka87}, with $f(n) = n^{O(n)}$ (but exponential space), and by Frank and Tardos~\cite{FrTa87} so that the algorithm runs in polynomial space, and with $f(n) = O(n^{2.5n + o(n)})$.

We thus give an \ILPF formulation of the problem with a number of variables depending on $\p$ and $t$. As we saw previously, since we may assume that $t \le p$ in any positive instance, the result will follow (by Lenstra's result) for the parameterization by $p$ only.

Let $(\Users, \Res, \ur, \res(P, 0, d, t))$ be the input instance of \RCP{s=0}. For any $N \subseteq \P$, let $U_N$ denote the set of users having neighborhood exactly $N$ in $\P$, or, formally: $U_N = \{u \in \Users : N(u) = N\}$.
Moreover, we define the following set called \emph{configurations}: 
\[
\C = \left\{\{N_1, \dots, N_b\} : b \le t, N_i \subseteq \P, i \in [b], \bigcup_{i=1}^b N_i = \P\right\}.  
\]

For any $N \subseteq \P$, we note \[ \C_N = \left\{ c = \{N_1, \dots, N_{b_c}\} \in \C : N=N_i \text{ for some } i \in [b_c]\right\}\] the set of configurations involving $N$. Informally, a configuration $\{N_1, \dots, N_b\}$ represents a way to dominate $P$, by picking one user in $U_{N_i}$, for each $i \in [b]$.

The variables of our ILP are in one-to-one correspondence with elements of $\C$, and will be denoted by $\{x_c : c \in \C\}$. Since $\C$ is of size bounded by $O(\sum_{b=1}^t 2^{b\p} )$, the number of variables is bounded by a function of $\p$ and $t$ only. Then, we define the following two sets of constraints:
\begin{enumerate}
	\item \label{const:sumd} $\sum_{c \in \C} x_c = d$,
	\item \label{const:sumUn} $\sum_{c \in C_N} x_c \le |U_N|$ for all $N \subseteq \P$.
\end{enumerate}
We now explain the idea of the ILP. 
Observe that in a positive instance, there always exists a set of teams in which in each set, each user has a different neighborhood. For any $T \subseteq \Users$, define $\phi(T) = \{N(u): u \in T\}$, the set of neighborhoods of users in $T$. Then, by definition of the problem, for any set of teams $V = \{T_1, \dots, T_d\}$, we have $\Phi(T_i) \in \C$ for all $i \in [d]$. Notice that we might have $\Phi(T_i) = \Phi(T_j)$ for $i, j \in [d]$, $i \neq j$. We can associate, with each such set of teams, a vector $X^V = \{x^V_c\}_{c \in \C}$, where $x^V_c$ is the number of sets of $V$ having configuration $c \in C$. By the remark above, we might have $X^V = X^{V'}$ for two different sets of teams $V$ and $V'$, in which case we will say that these two sets of teams are \emph{configuration-equivalent}. Observe that given a vector $X=\{x_c\}_{c \in \C}$ such that $X=X^{V^*}$ for a fixed set of teams $V^*$, we can construct in polynomial time a set of teams $V$ that is configuration-equivalent to $V^*$; constraints~(\ref{const:sumd}) and~(\ref{const:sumUn}) aim to find such a vector.
Suppose that there exists a set of teams $V^* = \{T_1, \dots, T_d\}$ of the problem. It is clear that $X^{V^*}$ fulfills constraints~(\ref{const:sumd}) and~(\ref{const:sumUn}). Conversely, constraints in~(\ref{const:sumd}) ensure that the set of teams will contain $d$ sets, while constraints in~(\ref{const:sumUn}) ensure that when constructing a set of configuration $c = \{N_1, \dots, N_{b_c}\}$, there must exist a new user having neighborhood exactly $N_i$ for all $i \in [b_c]$ and that has not been already assigned to another set. 
\qed
\end{proof}

\subsection{User reductions}\label{sec:kernel}

We now focus on reduction rules which can be performed in polynomial time and result in an equivalent instance having a smaller number of users. More formally, we say that a (decision) problem has a \emph{kernel}~\cite{DoFe13} of size $f$, for some computable function $f:\mathbb{N} \rightarrow \mathbb{N}$, if there exists a polynomial algorithm which, given an instance $x$ with parameter $k$, outputs an instance $x'$ of size $|x'|$ with parameter $k'$ such that:
\begin{inparaenum}[(i)]
	\item $k' \le k$, 
	\item $x$ is positive if and only if $x'$ is positive, and
	\item $|x'| \le f(k)$.
\end{inparaenum}
In the case of \RCP{s=0} our aim is thus to obtain an equivalent instance with a number of users bounded by a function of $d$ and $t$.

While the role of $t$ was so far of less interest for the complexity of the problem, we show that the problem behaves differently from the kernelization point of view, depending on whether $t=\infty$ or not. We first show that when $t=\infty$, the problem admits a kernel with at most $dp$ users.
To do so, we will make use of the following:

\begin{lemma}[$d$-expansion Lemma~\cite{CyFoKoLoMaPiPiSa15}]\label{lemma:dexpansion}
Let $d \ge 1$ be a positive integer and $G=(A, B, E)$ be a bipartite graph with bipartition $(A, B)$ and $E \subseteq A \times B$ such that for all $b \in B$, $N(b) \neq \emptyset$. If $|B| \ge d|A|$, then there exist non-empty vertex sets $X \subseteq A$ and $Y \subseteq B$ which can be found in time polynomial in the size of $G$, such that:
\begin{enumerate}[(i)]
	\item $N(Y) \subseteq X$, and
	\item there is a $d$-expansion of $X$ into $Y$: a collection $M \subseteq E \cap (X \times Y)$ such that every vertex of $X$ is incident to exactly $d$ edges of $M$, and exactly $d|X|$ vertices of $Y$ are incident to an edge of $M$.
\end{enumerate}
\end{lemma}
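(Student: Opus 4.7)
The plan is to reduce the $d$-expansion problem to the classical crown decomposition (i.e., the $d=1$ case) by a vertex-duplication trick, and then to project the resulting structure back to $G$.

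\textbf{Reduction to matching.} First I would construct the auxiliary bipartite graph $G^{*}=(A^{*},B,E^{*})$, where $A^{*}$ contains $d$ disjoint copies $a^{(1)},\dots,a^{(d)}$ of each $a\in A$ and every copy $a^{(j)}$ inherits the neighborhood $N_{G}(a)\subseteq B$. Then $|A^{*}|=d|A|\le|B|$, and the existence of a $d$-expansion of some $X\subseteq A$ into some $Y\subseteq B$ in $G$ is equivalent to the existence of a matching in $G^{*}$ that saturates $X^{*}=\{a^{(j)}:a\in X,\ j\in[d]\}$ and whose image lies in $Y$. Thus it suffices to produce $X^{*}\subseteq A^{*}$ and $Y\subseteq B$ in $G^{*}$ satisfying the classical crown properties: $N_{G^{*}}(Y)\subseteq X^{*}$ and a matching $M\subseteq E^{*}\cap(X^{*}\times Y)$ saturating $X^{*}$.

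\textbf{Crown decomposition in $G^{*}$.} Next I would compute a maximum matching $M$ in $G^{*}$. Let $U\subseteq B$ be the set of $B$-vertices unmatched by $M$. Define $R$ as the set of vertices of $G^{*}$ reachable from $U$ via $M$-alternating paths (non-matching edge from $B$ to $A^{*}$, matching edge from $A^{*}$ to $B$). Set $Y:=R\cap B$ and $X^{*}:=R\cap A^{*}$. By maximality of $M$ (no augmenting path exists), every vertex of $X^{*}$ is $M$-saturated and matched into $Y\setminus U$, and no edge of $G^{*}$ goes from $Y$ to $A^{*}\setminus X^{*}$; hence $N_{G^{*}}(Y)\subseteq X^{*}$. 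For non-emptiness: if $|B|>d|A|$ then $U\neq\emptyset$ and the assumption $N_{G^{*}}(b)\neq\emptyset$ for $b\in U$ forces $X^{*}\neq\emptyset$; the boundary case $|B|=d|A|$ either admits a perfect matching of $A^{*}$ into $B$ (in which case $(X^{*},Y)=(A^{*},B)$ works directly with $N(Y)\subseteq A^{*}=X^{*}$), or fails Hall's condition, giving a defect set that again produces a non-empty crown. All steps can be done in polynomial time by standard augmenting-path computations.

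\textbf{Projecting back to $G$ via copy-symmetry.} The crucial step is to check that $X^{*}$ is a union of full copy-groups, i.e. $a^{(i)}\in X^{*}$ implies $a^{(j)}\in X^{*}$ for every $j\in[d]$. This is where copy-symmetry enters: all $d$ copies of $a$ share the same neighborhood, so if an $M$-alternating path reaches $a^{(i)}$ through some $b\in Y$, then $b$ is also adjacent to $a^{(j)}$; one may swap the matching along the path so that $a^{(j)}$ becomes the endpoint, showing $a^{(j)}\in R$ as well. Given this symmetry, define $X:=\{a\in A:a^{(1)}\in X^{*}\}$, so that $X^{*}$ equals the set of all copies of vertices in $X$. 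Since $M$ saturates $X^{*}$ into $Y$ and is a matching, it projects to a $d$-expansion of $X$ into $Y$ in $G$, and $N_{G}(Y)\subseteq X$ follows from $N_{G^{*}}(Y)\subseteq X^{*}$.

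\textbf{Main obstacle.} The technical heart of the argument is the copy-symmetry claim in the last paragraph: one must carefully verify that the alternating-path reachability relation in $G^{*}$ is invariant under the symmetric group permuting the $d$ copies of each $a\in A$. A clean way to avoid ad-hoc path surgery is to choose $M$ among maximum matchings so that the multiset of matched copies of each $a$ is lexicographically first, or to symmetrize $M$ across copies before computing $R$; this guarantees that either all copies of $a$ are matched to $Y$-vertices or none is reached by an alternating path, yielding the desired clean partition of $A^{*}$ into full copy-groups.
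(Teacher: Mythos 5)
The paper does not prove this lemma: it is imported as a black box from the cited monograph of Cygan et al., so there is no in-paper argument to compare against. Your proof is, in essence, the standard proof from that reference --- duplicate each vertex of $A$ into $d$ copies with identical neighbourhoods, extract a crown-type pair $(X^{*},Y)$ with a matching saturating $X^{*}$ and $N_{G^{*}}(Y)\subseteq X^{*}$ via alternating paths from the unmatched vertices of $B$, and project back --- and it is correct, including the treatment of the boundary case $|B|=d|A|$ and of non-emptiness. One substantive remark: the step you single out as the ``main obstacle'', namely that $X^{*}$ is a union of full copy-groups, requires neither path surgery nor any lexicographic or symmetrized choice of the maximum matching. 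It is an immediate consequence of the two crown properties you have already established: every $a^{(i)}\in X^{*}$ is matched into $Y$, so $a$ has a neighbour $b\in Y$ in $G$; since all $d$ copies of $a$ have the same neighbourhood, every copy $a^{(j)}$ lies in $N_{G^{*}}(b)\subseteq N_{G^{*}}(Y)\subseteq X^{*}$. Hence $X^{*}$ is automatically the set of all copies of $X=\{a\in A: N_G(a)\cap Y\neq\emptyset\}$, and the machinery proposed in your final paragraph can simply be deleted.
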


\begin{theorem}\label{thm:kernel}
\RCP{s=0, t=\infty} admits a kernel with at most $dp$ users.
\end{theorem}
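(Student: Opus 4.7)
The plan is to design a polynomial-time reduction rule based on Lemma~\ref{lemma:dexpansion} and iterate it until the user set is small enough. After an initial sweep that discards every user $u$ with $N(u) \cap P = \emptyset$ (such a user contributes nothing to covering $P$), I form the bipartite graph $G = (P, \Users, E)$ with $(r, u) \in E$ iff $r \in N(u)$. Whenever $|\Users| \ge d|P|$ and $P \neq \emptyset$, I apply the expansion lemma with $A = P$ and $B = \Users$ to obtain nonempty sets $X \subseteq P$ and $Y \subseteq \Users$ with $N(Y) \subseteq X$, together with a $d$-expansion $M$ in which every $x \in X$ has exactly $d$ matched partners in $Y$ and exactly $d|X|$ users of $Y$ are saturated. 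The reduction rule then deletes the resources $X$ from $P$ and the users $Y$ from $\Users$.

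The crux is to show that this rule is safe. For the forward direction, given a set of teams $V_1, \dots, V_d$ satisfying $\res(P, 0, d, \infty)$, each user in $V_i \cap Y$ has its $P$-neighborhood contained in $X$ (since $N(Y) \subseteq X$), so $V_i \setminus Y$ still covers $P \setminus X$; hence $V_1 \setminus Y, \dots, V_d \setminus Y$ is a valid set of teams in the reduced instance. For the backward direction, starting from a set of teams $V_1', \dots, V_d'$ that covers $P \setminus X$ using only users from $\Users \setminus Y$, I partition the $d|X|$ saturated users of $Y$ into $d$ groups $T_1, \dots, T_d$, each containing exactly one of the $d$ matched partners of every $x \in X$; this is immediate because $M$ is $d$-regular on the $X$-side and, by the count $d|X|$ of both saturated $Y$-vertices and edges of $M$, every saturated vertex carries a unique edge of $M$. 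Each $T_j$ therefore covers $X$, and setting $V_i := V_i' \cup T_i$ yields $d$ pairwise disjoint teams (the $V_i'$-parts are disjoint and lie outside $Y$, while the $T_i$-parts are disjoint and lie inside $Y$), each covering $(P \setminus X) \cup X = P$.

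The kernelization algorithm now simply iterates ``remove useless users; apply the rule whenever applicable''. Every successful application strictly decreases $|P|$ by $|X| \ge 1$, so there are at most $p$ rounds, each running in polynomial time. At termination, either $P$ is empty --- in which case the instance is trivially satisfiable and we return a canonical trivial yes-instance --- or $|\Users| < d|P| \le dp$, which is exactly the promised bound. The one mild subtlety is that, after each deletion, additional users in $\Users \setminus Y$ may become useless because their $P$-neighborhoods happened to lie entirely inside $X$; these are discarded at the start of the next iteration, reinstating the hypothesis ``$N(b) \ne \emptyset$ for all $b \in B$'' required by the expansion lemma.
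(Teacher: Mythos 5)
Your proposal is correct and follows essentially the same route as the paper: delete users whose neighborhood misses $P$ entirely, then repeatedly invoke the $d$-expansion lemma on the bipartite user--resource graph to find $X \subseteq P$ and $Y \subseteq \Users$ with $N(Y) \subseteq X$, delete both, and argue safeness in both directions exactly as you do (the paper's choice of users $u^r_1, \dots, u^r_d$ for each $r \in X$ is the same partition of the saturated $Y$-vertices into your groups $T_1, \dots, T_d$). The only cosmetic difference is that you bound the number of iterations by the decrease in $|P|$ while the paper counts the decrease in $|\Users|$; both give a polynomial-time kernelization with at most $dp$ users.
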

\begin{proof}
Suppose we are given an instance of \RCP{s=0, t=\infty}. We present two reduction rules which are used to decrease the number of users. For each of these rules, we will prove that the instance is positive iff the reduced instance is positive, in which case we will say that the rule is \emph{safe}.\\

\textit{Reduction Rule 1:} if there exists $u \in \Users$ with $N(u) = \emptyset$, then delete $u$.
\begin{proof}[of safeness]
Simply observe that such a user cannot participate in any set of teams if the instance is positive, and, conversely, cannot turn a negative instance into a positive one if it is deleted. \qed
\end{proof}

\textit{Reduction Rule 2:} if there exist $X \subseteq \P$, $Y \subseteq \Users$ such that $N(Y) \subseteq X$ and there is a $d$-expansion of $X$ into $Y$, then delete $X$ from $\P$, $Y$ from $\Users$, and $(Y \times X) \cap \ur$ from $\ur$.
\begin{proof}[of safeness]
If the instance is a positive one, then there exists a set of teams $\{V_1, \dots, V_d\}$. Then, for all $r \in \P \setminus X$, there does not exist $u \in Y$ such that $(u, r) \in \ur$, since $N(Y) \subseteq X$. Hence, $N(V_i \setminus Y) \supseteq \P \setminus X$, and thus $\{V_1 \setminus Y, \dots, V_d \setminus Y\}$ is a set of teams for the reduced instance, which is thus a positive one. 

Conversely, suppose that the reduced instance is a positive one: there exist $V_1, \dots, V_d$, disjoints sets of users from $\Users \setminus Y$ such that $N(V_i) \supseteq \P \setminus X$. Since there is a $d$-expansion of $X$ into $Y$, for all $r \in X$, there exist $u^r_1, \dots u^r_d \in Y$ such that $(u^r_i, r) \in \ur$ for all $i \in [d]$, where $u^r_i \neq u^{r'}_{i'}$ for all $r \neq r'$ and $i \neq i'$. Hence, for all $i \in [d]$, if we set $V_i' = V_i \cup \{u^r_i : r \in X\}$, we have $V_i' \cap V_j' = \emptyset$ for all $1 \le i < j \le d$, and $N(V_i') \supseteq \P$ for all $i \in [d]$, and thus we have a positive instance as well, which proves that the rule is safe. \qed
\end{proof}

Since each reduction rule can be applied in polynomial time, and since each of them decreases the number of users by at least one, the algorithm runs in polynomial time. Finally, by Lemma~\ref{lemma:dexpansion}, if none of the previous reduction rules applies, then $|\Users| \le dp$, and we thus have a kernel with at most $dp$ users, as desired. \qed
\end{proof}

As Li \etal \cite{LiWaTr09} point out, \RCP{s=0, d=1} is equivalent to the \textsc{Set Cover Problem}. Known kernel lower bounds for this problem \cite{DoLoSa09} lead to the following theorem, which is in sharp contrast to the previous case.

\begin{theorem}\label{thm:kernelLowerBound}
\RCP{s=0, d=1} (and thus \mbox{\RCP{s=0}}) does not admit a kernel with $(\p+t)^{O(1)}$ users, unless $\text{coNP} \subseteq \text{NP/poly}$.
\end{theorem}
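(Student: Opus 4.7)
The plan is to exploit the equivalence between \RCP{s=0,d=1} and the \textsc{Set Cover Problem} (noted by Li \etal), and then transfer the known kernel lower bound of \cite{DoLoSa09}. The reduction of \cite{DoLoSa09} shows that \textsc{Set Cover}, parameterized by the universe size $n$ together with the solution size $k$, does not admit a kernel of size polynomial in $n+k$ unless $\text{coNP} \subseteq \text{NP/poly}$.

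First I would make the equivalence precise in the form of a polynomial parameter transformation. Given an instance $(\U,\S,k)$ of \textsc{Set Cover}, where $\U$ is a universe of $n$ elements and $\S \subseteq 2^{\U}$ is a collection of subsets, I would build an instance of \RCP{s=0,d=1} by setting $\Res = P = \U$, introducing one user $u_S$ for each $S \in \S$ with $N(u_S) = S$, and taking $t = k$ (and $d=1$, $s=0$). Clearly $\ur$ satisfies $\res(P,0,1,t)$ iff there is a set of at most $t$ users whose neighborhoods cover $P$, iff $(\U,\S,k)$ admits a set cover of size at most $k$. The parameters of the produced instance satisfy $p = n$ and $t = k$, so $p + t = n + k$.

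Next I would suppose for contradiction that \RCP{s=0,d=1} admits a kernelization reducing the number of users to $(p+t)^{O(1)}$. Applying this kernel after the reduction above yields an equivalent \RCP{s=0,d=1} instance with at most $(n+k)^{O(1)}$ users, and since $p \le n+k$ and the encoding size of the instance is polynomial in the number of users plus $p$, the total bit-size is polynomial in $n+k$. Interpreting the kernelized instance back as a \textsc{Set Cover} instance (each user with neighborhood $N(u) \subseteq P$ becomes a subset of the universe $P$) then produces a polynomial compression of \textsc{Set Cover} parameterized by $(n,k)$, contradicting \cite{DoLoSa09} under the assumption $\text{coNP} \not\subseteq \text{NP/poly}$.

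The main obstacle, which is really a bookkeeping check rather than a deep one, is ensuring that the parameter transformation is genuinely polynomial in both directions: the reduction from \textsc{Set Cover} to \RCP{s=0,d=1} preserves $n+k$ up to a linear factor, and the kernelized \RCP{s=0,d=1} instance can be re-interpreted as a \textsc{Set Cover} instance whose universe has size at most $p \le n+k$ and whose number of sets equals the (polynomially bounded) number of remaining users. Once this is set up cleanly, the contradiction with \cite{DoLoSa09} is immediate, and the same lower bound transfers \emph{a fortiori} to the more general \RCP{s=0}.
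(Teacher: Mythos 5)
Your proposal is correct and follows exactly the route the paper intends: the paper proves this theorem in a single sentence by invoking the equivalence of \RCP{s=0,d=1} with \textsc{Set Cover} and the kernel lower bound of Dom, Lokshtanov and Saurabh for \textsc{Set Cover} parameterized by universe size and solution size. Your write-up merely makes explicit the polynomial parameter transformation (users $\leftrightarrow$ sets, $p=n$, $t=k$) and the back-translation of the kernelized instance, which is precisely the bookkeeping the paper leaves implicit.
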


\section{Conclusion and future work}\label{sec:conclusion}
We considered \RCP{}, a problem introduced recently in the area of access control to analyze the resiliency of a system. 
Given the large number of natural parameters in an instance of this problem, and given that these parameters are likely to take small values in practice, our goal was to provide a systematic analysis of the complexity of the problem using the framework of parameterized complexity. 
For all but one possible combination of the parameters, we were able to obtain either a positive or negative result. We also considered a restricted variant of the problem for which we settled the parameterized complexity of all possible combinations of the parameters.
A first obvious idea of future work is thus to fill the remaining hole of Figure \ref{fig:lattices}, namely to decide whether \RCP{} is in FPT, XP, W[1]-hard or para-(co)NP-hard parameterized by $p$.

Another interesting further line of research would be to study resiliency aspects with respect to other problems. In the context of graphs for instance, we could define the problem of determining whether upon removal of at most $s$ vertices, a given graph still satisfies some property given by another combinatorial problem, \eg having a vertex cover of size $k$. We believe that considering structural parameterizations (together with $s$) might lead to interesting new results. As in our case, the complexity of such a new problem will certainly depend on the complexity of the considered underlying problem (\ie the case $s=0$).
%

\bibliographystyle{plain}
\bibliography{refs}

%

%
%

%
%

\end{document}